\newtheorem{theorem}{Theorem}
\newtheorem{lemma}[theorem]{Lemma}
\newtheorem{definition}{Definition}
\newtheorem{example}{Example}
\newcommand{\mycomment}[3]%
{%
\marginpar{%
  \hfil%
  \tiny{\textcolor{#2}{{\bf\textsc{#1}}}}%
  \hfil%
}%
\footnote{\textcolor{#2}{{\bf\textsc{#1}:}~~#3}}
}
\begin{document}

\title {On the Construction of Prefix-Free and Fix-Free Codes with Specified Codeword Compositions}

\author
{Ali Kakhbod\thanks{Ali Kakhbod is with the department of
  Electrical Engineering and Computer Science (EECS), University of Michigan, Ann Arbor, MI, USA. (email:
 \texttt{akakhbod@umich.edu}).}
 and Morteza Zadimoghaddam \thanks{Morteza Zadimoghaddam is with the Computer Science and Artificial Intelligence Laboratory (CSIL), MIT, Cambridge, MA, USA. (email:
 \texttt{morteza@mit.edu}). }
 }


\maketitle
\begin{abstract}
We investigate  the construction of prefix-free and fix-free codes with specified codeword compositions.  We present  a polynomial time algorithm which constructs a fix-free code with the same codeword compositions as a given code for a special class of codes called distinct codes.  We consider the construction of optimal fix-free codes which minimize the average codeword cost for general letter costs with uniform distribution of the codewords and  present an approximation algorithm to find a near optimal fix-free code with a given constant cost.
\\

 \textit{Index Terms}--- Algorithm,  Approximation algorithm, Prefix-free code, Fix-free code.
\end{abstract}
\begin{section}{Introduction}
The basic elements of  a discrete communication system are its source, encoder, channel, decoder and destination. The source may be represented as a random variable, $X$, taking on values from the set of source characters $\{x_1,x_2,\cdots,x_M\}$ with probabilities $p_1,p_2,\cdots,p_{M}$, respectively. A message is a sequence of source characters.  To facilitate transmission, the  encoder associates with every source character, $x_i$, a finite sequence of  code characters $a_1,a_2,\cdots,a_D$ (D-ary). Such a sequence of code characters is called a codeword. A code, denoted by $S$, is the collection of all codewords. The encoded message is then transmitted over the channel which we assume to be noiseless. At the receiving end, the decoder attempts to reproduce the original message by assigning a set of source characters to the coded message. \\   

To avoid ambiguity, every finite sequence of code characters  must correspond to no more than one message. A code that conforms with  this requirement  is said to be a \textit{uniquely decodable} code. Furthermore, to simplify the decoding procedure,  two other type of codes are often used in communication systems defined as follows.   If  no codeword is a prefix to some other codeword, the code is said to be a \textit{prefix-free} code, and  if  no codeword is a prefix  or suffix to some other codeword, the code is said to be a  \textit{fix-free} code. We denote the set of all codes, uniquely decodable codes, prefix-free codes and fix-free codes,  that can be constructed from the code character $\{a_1,a_2\cdots,a_D\}$, by $\mathcal{C}^D, \mathcal{C}_{ud}^D,$ $\mathcal{C}_{pf}^D$ and $\mathcal{C}_{ff}^D$, respectively. Along the paper, superscript $D$ is omitted for binary codes. In general, directly from definitions, it can be deduced that $\mathcal{C}^D \supset \mathcal{C}_{ud}^D \supset \mathcal{C}_{pf}^D \supset \mathcal{C}_{ff}^D.$ We illustrate it with the following example. 
\\

\begin{example}
Consider the following four binary codes,
$$
S_1=\{00,10,11\}
$$
$$
S_2=\{00,10,11,011\}
$$
$$
S_3=\{00,10,11,110,100\}
$$
$$
S_4=\{0,001,100,110\}.
$$
 $S_1$ is a fix-free code ($S_1 \in \mathcal{C}_{ff}$), $S_2$ is a prefix-free code but is not fix-free ($S_2 \in \mathcal{C}_{pf}, S_2 \notin S_{ff}$), $\mathcal{C}_3$ is a uniquely decodable code but is neither prefix-free nor fix-free ($S_3 \in \mathcal{C}_{ud}, S_3 \notin \mathcal{C}_{pf}, S_3 \notin \mathcal{C}_{ff}$) but $\mathcal{C}_4$ is neither uniquely decodable,   prefix-free nor fix-free ($S_4 \in S,  S_4 \notin \mathcal{C}_{ud}, S_4 \notin \mathcal{C}_{pf}, S_4 \notin \mathcal{C}_{ff}$). 
\end{example}

Let $S=\{s_1,s_2,\cdots,s_n\}$ be a code. The composition of a codeword $s_k, k=1,2,\cdots,n,$ is written as $\big(\delta_1^{(k)},\delta_2^{(k)},\cdots,\delta_D^{(k)} \big)$ where $\delta_i^{(k)}$ is the number of times the code character $a_i$ appears in the codeword $s_k$. Suppose that a set of costs $\{c_1,c_2,\cdots,c_D\}$  associated with the respective code characters $\{a_1,a_2,\cdots,a_D\}$, i.e. $c_i$ is positive  corresponding to $a_i, i=1,2,\cdots,D$, then the average codeword cost of the code $S$ is equal to 

\begin{eqnarray}
\label{eqn1}
\sum_{k=1}^{n}p_k \left[\sum_{j=1}^D {\delta_j^{(k)}c_j}\right]
\end{eqnarray}
 where $p_k$ is the probability assigned to $s_k, k=1,2,\cdots,n$. \\

\begin{example}
The message $\alpha\beta\rho\alpha\kappa\alpha\delta\alpha\beta\rho\alpha\alpha\alpha\rho$ can be considered to be a 6-ary
message over the alphabet $\{\alpha,\beta,\kappa,\delta,\chi,\rho\}$. Its length is 14, and
its composition vector is $(7,2,1,1,0,3)$. Assuming respective
symbol costs $(1,3,3,2,10,1)$ then the cost 
is 21.
\end{example}

It is known that for equal costs, i.e.,  $c_1=c_2=\cdots=c_D$,  Huffman's algorithm, \cite{huff}, derives an optimal prefix-free code,  but when the costs $c_1,c_2,\cdots,c_D$ are not all equal, the composition of the codewords becomes important.
The problem of constructing optimal code for minimizing the average cost has been considered for prefix-free codes in  \cite{Alter,golin,golin2008}.
Constructing optimal fix-free codes with the aim of minimizing the average code length, equal letter costs, is recently considered in \cite{serap2}.
Upper bounds on the average code length of optimal fix-free codes which minimize the average code length for equal letter cost, but general probability distributions of the alphabet symbols are provided in \cite{ahl,ye} (in contrast, in this work, we consider the construction of optimal fix-free codes which minimize the average codeword cost for general letter costs with uniform distribution of the codewords). \\

As mentioned  in  above, when costs are unequal then the  composition of the codewords plays an important role in constructing optimal codes. In this paper, we provide a necessary and sufficient condition for the existence of a D-ary prefix-free code with a given set of compositions (this is an immediate extension of Proposition 2 of \cite{carter} to D-ary codes) and then we present a polynomial algorithm that results in a binary prefix-free code with the same composition set of a given code.  We also  present an algorithm to find a fix-free code for a given set of compositions of  a special class of codes  that we call  \textit{distinct} codes, if such a fix-free code exists. Consequently, we present an approximation algorithm to find a near optimal fix-free code with a given constant cost.  All the results refer to binary codes.

\end{section}

\begin{section}{Prefix-free codes}
\label{sec1}
In the following, we present a necessary and sufficient condition for existence of a D-ary prefix-free code with a given set of codeword compositions which is an immediate extension  of Proposition 2 of \cite{carter} to D-ary codes. Then,  we establish a polynomial time algorithm to find a binary prefix-free code with a given composition set.


\begin{theorem}[\cite{carter}]
\label{1}
Let $\Delta=\{\big(\delta_1^{(k)},\delta_2^{(k)},\cdots,\delta_D^{(k)} \big), \ 1 \leq k \leq n\}$  be the set of codeword  compositions of some code $S$ (with $n$ codewords). Then there exists a prefix-free code with the same set of codeword compositions if and only if the following inequality holds for each $(\delta_1^{(k)},\delta_2^{(k)},\cdots,\delta_D^{(k)}) \in \Delta$, (length of any codeword $s_k \in S, 1 \leq k \leq n,$ is denoted by $l_k$, i.e. $l_k:=\sum_{i=1}^D\delta_i^{(k)}$)
\begin{align}
\prod_{i=1}^{D-1} \binom{\sum_{j=i}^D \delta_j^{(k)}}{\delta_i^{(k)}} \geq   
   \sum_{t=1}^{l_k}{\sum_{\substack{\xi_1^{(k)}+\xi_2^{(k)}+\cdots + \xi_D^{(k)}=t \\
\xi_i^{(k)} \leq \delta_i^{(k)}}}  \Lambda_{\xi_1^{(k)},\xi_2^{(k)},\cdots,\xi_D^{(k)}} \prod_{r=1}^{D-1}{\binom{\sum_{i=r}^D  \left(\delta_i^{(k)}-\xi_i^{(k)}\right)}{\delta_r^{(k)}-\xi_r^{(k)}}}} 
\end{align}
where $ \Lambda_{\xi_1^{(k)},\xi_2^{(k)},\cdots,\xi_D^{(k)}} $ is the number of codewords of
composition  $(\xi_1^{(k)},\xi_2^{(k)},\cdots,\xi_D^{(k)})$ in $S$.
\end{theorem}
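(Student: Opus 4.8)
The plan is to derive both implications from a single combinatorial reading of the two products of binomial coefficients. First note that $\prod_{i=1}^{D-1}\binom{\sum_{j=i}^D\delta_j^{(k)}}{\delta_i^{(k)}}$ telescopes to the multinomial coefficient $l_k!/\big(\delta_1^{(k)}!\cdots\delta_D^{(k)}!\big)$, which is exactly the number of distinct $D$-ary words of length $l_k$ with composition $(\delta_1^{(k)},\cdots,\delta_D^{(k)})$ (choose the $\delta_1^{(k)}$ slots of $a_1$ among $l_k$ slots, then the $\delta_2^{(k)}$ slots of $a_2$ among the remaining $\sum_{j\ge2}\delta_j^{(k)}$, and so on). Likewise, for indices with $\xi_i^{(k)}\le\delta_i^{(k)}$ and $\sum_i\xi_i^{(k)}=t$, the factor $\prod_{r=1}^{D-1}\binom{\sum_{i=r}^D(\delta_i^{(k)}-\xi_i^{(k)})}{\delta_r^{(k)}-\xi_r^{(k)}}$ is the number of words of composition $(\delta_1^{(k)}-\xi_1^{(k)},\cdots,\delta_D^{(k)}-\xi_D^{(k)})$, i.e.\ the number of ways to extend one fixed word of composition $(\xi_1^{(k)},\cdots,\xi_D^{(k)})$ into a length-$l_k$ word of composition $(\delta_1^{(k)},\cdots,\delta_D^{(k)})$ by appending $l_k-t$ letters. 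With these readings, the right-hand side of the displayed inequality counts, with multiplicity, the pairs $(u,w)$ where $u$ is a codeword of $S$ of length in $\{1,\cdots,l_k\}$ with composition coordinatewise at most $(\delta_1^{(k)},\cdots,\delta_D^{(k)})$ and $w$ is a length-$l_k$ word of composition $(\delta_1^{(k)},\cdots,\delta_D^{(k)})$ having $u$ as a prefix; the $t=l_k$ term contributes exactly the codewords of composition $(\delta_1^{(k)},\cdots,\delta_D^{(k)})$ themselves.

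For necessity, suppose a prefix-free code $C$ with composition set $\Delta$ exists and fix a composition $(\delta_1^{(k)},\cdots,\delta_D^{(k)})$. The crucial observation is that in a prefix-free code no word can have two distinct codewords as prefixes, because the shorter of the two would then be a prefix of the longer. Hence every length-$l_k$ word of composition $(\delta_1^{(k)},\cdots,\delta_D^{(k)})$ is counted at most once on the right-hand side, so that side equals the exact number of such words that have a codeword of $C$ as a prefix (the word itself being allowed). This number cannot exceed the total number of length-$l_k$ words of that composition, which is the left-hand side; this is the asserted inequality.

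For sufficiency I would build a prefix-free code greedily. List the distinct compositions occurring in $\Delta$ in non-decreasing order of length and process them in that order, keeping the invariant that after each step the words selected so far form a prefix-free set containing exactly the prescribed number of codewords for every composition already handled. When the composition $(\delta_1^{(k)},\cdots,\delta_D^{(k)})$ of length $l_k$ is reached, any already-selected codeword of length $l_k$ has a composition different from $(\delta_1^{(k)},\cdots,\delta_D^{(k)})$ and so can neither equal nor be a proper prefix of a length-$l_k$ word of this composition; thus the words we must avoid are exactly the length-$l_k$ words of composition $(\delta_1^{(k)},\cdots,\delta_D^{(k)})$ that have some already-selected codeword of length $<l_k$ as a prefix. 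By the invariant and a union bound, the number of such forbidden words is at most
\[
\sum_{t=1}^{l_k-1}\ \sum_{\substack{\xi_1^{(k)}+\xi_2^{(k)}+\cdots+\xi_D^{(k)}=t\\ \xi_i^{(k)}\le\delta_i^{(k)}}}\Lambda_{\xi_1^{(k)},\xi_2^{(k)},\cdots,\xi_D^{(k)}}\ \prod_{r=1}^{D-1}\binom{\sum_{i=r}^D\left(\delta_i^{(k)}-\xi_i^{(k)}\right)}{\delta_r^{(k)}-\xi_r^{(k)}},
\]
which by the hypothesis is at most $\prod_{i=1}^{D-1}\binom{\sum_{j=i}^D\delta_j^{(k)}}{\delta_i^{(k)}}-\Lambda_{\delta_1^{(k)},\delta_2^{(k)},\cdots,\delta_D^{(k)}}$. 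Therefore at least $\Lambda_{\delta_1^{(k)},\delta_2^{(k)},\cdots,\delta_D^{(k)}}$ length-$l_k$ words of this composition are unforbidden; selecting that many of them as codewords keeps the selected set prefix-free — they are pairwise distinct words of equal length, none of them has an earlier codeword as a prefix by construction, and none is a prefix of an earlier codeword since earlier codewords are not longer — and restores the invariant. After all compositions have been processed, the selected words form a prefix-free code with composition set $\Delta$.

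The only genuinely delicate point is the counting in the necessity direction: noticing that prefix-freeness forces each word to have at most one codeword prefix, so the naive union bound becomes an equality and the right-hand side is a bona fide count of words rather than an overcount. The sufficiency direction is the familiar Kraft-type greedy argument, and once the combinatorial meaning of the two products of binomials is pinned down, the remaining manipulations are routine bookkeeping.
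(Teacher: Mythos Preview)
Your proof is correct and follows essentially the same approach as the paper: both interpret the left-hand side as the total number of words of the given composition and the inner product as the number of extensions of a fixed shorter word, deduce necessity by bounding the words excluded by the prefix condition, and prove sufficiency via the same greedy length-by-length construction. Your treatment is in fact more careful than the paper's, since you make explicit the key fact (glossed over there) that prefix-freeness forces each word to have at most one codeword prefix, so that the right-hand side is an exact count rather than an overcount.
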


\begin{proof}
The number of all codewords of composition  $(\delta_1^{(k)},\delta_2^{(k)},\cdots,\delta_D^{(k)})$ is $\prod_{i=1}^{D-1} \binom{\sum_{j=i}^D \delta_j^{(k)}}{\delta_i^{(k)}}$. In addition, it is clear that,  the number of words of composition $\big(\delta_1^{(k)},\delta_2^{(k)},\cdots,\delta_D^{(k)}\big)$ with a prefix code of composition  $(\xi_1^{(k)},\xi_2^{(k)},\cdots,\xi_D^{(k)})$ is  $\prod_{r=1}^{D-1}{\binom{\sum_{i=r}^D  \left(\delta_i^{(k)}-\xi_i^{(k)}\right)}{\delta_r^{(k)}-\xi_r^{(k)}}}$.  Therefore,  the necessity of the theorem is resulted when the number of all codewords of composition $(\delta_1^{(k)},\delta_2^{(k)},\cdots,\delta_D^{(k)})$ is greater than the number of codewords of composition $(\xi_1^{(k)},\xi_2^{(k)},\cdots , \xi_D^{(k)})$ which must be removed by the prefix condition.\\ 

To prove the sufficiency of the theorem, we construct a prefix code with the given composition by an algorithm.  We start from shorter codewords, at each iteration if we need 
$\Lambda_{\delta_1^{(k)},\delta_2^{(k)},\cdots,\delta_D^{(k)}}$ codewords of composition $(\delta_1^{(k)},\delta_2^{(k)},\cdots,\delta_D^{(k)})$, from the composition inequality there are at least $\Lambda_{\delta_1^{(k)},\delta_2^{(k)},\cdots,\delta_D^{(k)}}$ codewords with composition $(\delta_1^{(k)},\delta_2^{(k)},\cdots,\delta_D^{(k)})$ such that all of them do not have a prefix in the previous set of codewords. Hence, the constructed code is a prefix code with composition set $\Delta$.        
\end{proof}

\begin{example}
Let $\Delta:=\{(2,0),(1,1),(3,1)\}$ (where $(a,b)$ represents the composition of a codeword  with $a$ zeros and $b$ ones) from Theorem \ref{1} the existence of a binary prefix code with composition set $\Delta$ is guaranteed because, 

\begin{eqnarray}
\binom{2+0}{2}=1 &\geq& \Lambda_{2,0}\binom{0}{0}=1 \nonumber \\
\binom{1+1}{1}=2 &\geq& \Lambda_{1,1}\binom{0}{0}=1 \nonumber \\
\binom{3+1}{3}=4 &\geq& \Lambda_{3,1}\binom{0}{0}+\underbrace{\Lambda_{1,1}}_{=1}\binom{2+0}{2}+\Lambda_{2,0}\binom{1+1}{1}=4. \nonumber 
\end{eqnarray}   
For example, $\{00,01,1000\}$ is a binary prefix code with composition set $\Delta$. Now, suppose that one more composition $(1,1)$ is also added to $\Delta$, so define $\Delta':=\{(2,0),(1,1),(1,1),(3,1)\}$ then, there is not any binary prefix code with composition set $\Delta'$  because 

\begin{eqnarray}
\binom{3+1}{3}=4 &\ngeq& \Lambda_{3,1}'\binom{0}{0}+\underbrace{\Lambda_{1,1}'}_{=2}\binom{2+0}{2}+\Lambda_{2,0}'\binom{1+1}{1}=5. \nonumber
\end{eqnarray}
\end{example}

From now on all the results are presented for binary codes. In the following theorem we present a polynomial algorithm to find a prefix-free code with the same composition set as a  given code $S$, if such a prefix-free code exists. 

\begin{definition}
For any word $s$ and two numbers $a$ and $b$, $f_{s,a,b}$ is equal
to the number of codewords such as $s'$ with $a$ zeros and $b$ ones such
that $s$ is a prefix of $s'$.
\end{definition}

\begin{theorem}
For any code $S=\{s_1, s_2, \dots, s_n\}$, there is a
polynomial time\footnote{In terms of $n$ and the sum of the lengths of the $n$ codewords.} algorithm which finds a prefix-free code with the same
composition set as the given code $S$, if there exists such a  prefix-free code.
\end{theorem}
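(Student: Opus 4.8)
The plan is to turn the greedy argument behind the sufficiency part of Theorem~\ref{1} (specialized to $D=2$) into an efficient procedure. I would order the codewords of $S$ by non-decreasing length and process their compositions in that order, maintaining the prefix-free set $P$ of codewords chosen so far. Whenever the next required codeword has composition $(a,b)$ (that is, $a$ zeros and $b$ ones), I must either exhibit a binary string of that composition having no element of $P$ as a prefix, or certify that none exists. Since there may be exponentially many strings of composition $(a,b)$, enumeration is out of the question; instead I would build the new codeword one bit at a time, the crucial ingredient being a polynomial-time test for whether a given partial string can still be completed to an admissible codeword.

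The key observation is that, $P$ being prefix-free, every binary string $t$ has at most one element of $P$ among its prefixes, so no inclusion--exclusion correction is needed. For a word $s$ having no element of $P$ as a prefix, the number $g(s,a,b)$ of binary strings of composition $(a,b)$ that have $s$ as a prefix and no element of $P$ as a prefix is then
\[
g(s,a,b)\;=\;f_{s,a,b}\;-\!\!\sum_{\substack{p\in P\\ s\text{ is a prefix of }p}}\!\! f_{p,a,b},
\]
where $f_{w,a,b}$ is the quantity of the Definition, equal for a word $w$ with $z$ zeros and $o$ ones to $\binom{(a-z)+(b-o)}{a-z}$ when $z\le a$ and $o\le b$, and to $0$ otherwise; if instead some $p\in P$ is already a prefix of $s$, then $g(s,a,b)=0$. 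Each value of $g$ therefore costs $O(n)$ arithmetic operations on integers of polynomially many bits.

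Given the required composition $(a,b)$, I would construct the codeword greedily: start with $s$ the empty word; while $|s|<a+b$, append $0$ if that is allowed (that is, $s$ has fewer than $a$ zeros) and $g(s0,a,b)>0$, and otherwise append $1$. The identity $g(s,a,b)=g(s0,a,b)+g(s1,a,b)$ for $|s|<a+b$ (with a summand read as $0$ when the corresponding letter is already used up) guarantees that, as long as $g(s,a,b)>0$, one of the two one-bit extensions keeps $g$ positive; hence the construction never stalls inside a codeword and terminates with a string of composition $(a,b)$ that avoids $P$. Adding this string to $P$ preserves prefix-freeness: by construction no element of $P$ is a prefix of it, and it is a prefix of no element of $P$ because every element of $P$ has length at most $a+b$.

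It then remains to check that the procedure succeeds exactly when a prefix-free code with composition set $\Delta$ exists. In the binary case, the inequality of Theorem~\ref{1} for a composition $(a,b)$ states precisely that $f_{\epsilon,a,b}$ minus the corresponding sum of $f_{p,a,b}$ over all codewords $p$ of length strictly smaller than $a+b$ is at least $\Lambda_{(a,b)}$. When composition $(a,b)$ is processed, every strictly shorter codeword has already been placed, and an already-placed codeword of length $a+b$ has composition $(a,b)$ only if it is one of the $(a,b)$-codewords itself; such a codeword removes exactly one admissible string, namely itself, and by the disjointness above nothing else changes. Consequently, after placing $j<\Lambda_{(a,b)}$ codewords of this composition one still has $g(\epsilon,a,b)\ge\Lambda_{(a,b)}-j\ge 1$, so the greedy goes through, and the particular earlier choices are irrelevant since they enter $g$ only through terms indexed by strictly shorter or same-composition codewords. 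Conversely, if Theorem~\ref{1}'s inequality fails for some composition, the count reaches $0$ before the prescribed number of codewords of that composition has been placed, the algorithm observes $g(\epsilon,a,b)=0$, and it correctly reports that no such prefix-free code exists. The total number of bit-steps over all $n$ codewords is $\sum_k l_k$, and each performs $O(n)$ evaluations of $f$, so the running time is polynomial in $n$ and $\sum_k l_k$. I expect the counting step to be the one genuine obstacle, and the prefix-free disjointness identity for $g$ is exactly what disposes of it.
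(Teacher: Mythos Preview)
Your proposal is correct and follows essentially the same approach as the paper: sort the codewords by length, then build each new codeword bit by bit, at each step testing via a count of the form $f_{s,a,b}-\sum_{p} f_{p,a,b}$ whether the current prefix can still be completed. The paper expresses this same count slightly differently (summing $f_{z,a-c,b-d}$ over the suffixes $z$ with $s'_j=xz$), but the algorithm and its analysis are identical; your explicit recurrence $g(s,a,b)=g(s0,a,b)+g(s1,a,b)$ and the tie-back to Theorem~\ref{1} make the correctness argument somewhat cleaner than the paper's.
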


\begin{proof}
Without loss of generality, suppose $|s_1| \leq |s_2| \leq \dots
\leq |s_n|$, where $|w|$ is the length of $w$. Our
algorithm has $n$ iterations. In the $i$th iteration, we find a
string $s'_i$ such that the composition of $s'_i$ is the same as the composition  of
$s_i$ and $s'_j$ is not a prefix of $s'_i$ for any $j<i$, as
follows. After $n$th iteration we reach the desired code $S'=\{s'_1,
s'_2, \dots, s'_n\}$ with the same  composition set as the code $S$, and furthermore it is a
prefix-free code.

 Let $a$ and $b$ be the number of zeros and ones in $s_i$,
respectively. If $\Sigma_{j=1}^{i}f_{s_j,a,b} > \binom{a+b}{a}$, then there is
not a code such as $S'$ with the desired properties. Otherwise, there is
a string such as $s'_i$ with the mentioned conditions. We can find the
smallest string such as $s'_i$ in polynomial time as
follows. We iteratively find the bits/digits (code character in binary case) of $s'_i$. For any string such as
$x$ we can check whether  there is a string such as $y$ with the same composition set as $s_i$ such that $x$ is a prefix of $y$ and $s_j$ is
not a prefix of $y$ for any $j<i$. Existence of such a string is
equivalent to this property that the sum of $f_{z,a-c,b-d}$ for all codewords
such as  $z$ for which $s_j=xz$, for some $j<i$ (the notation $xz$ is a concatenation of two codewords $x$ and $z$) is less than all  the codewords
such as $w$ with $a-c$ zeros and $b-d$ ones ($c$ and $d$ are the
number of zeros and ones in $x$, respectively). Now, for finding the smallest $s'_i$, we check whether there is a
$s'_i$ which starts with $0$. If there is such a string, we set the
first bit of $s'_i$ zero. Otherwise, we set it one. Suppose we
have set the first $l$ bits of $s'_i$ and we want to set the $l+1$th
bit. We construct the string  $x$ by concatenating these $l$ bits. We
check whether there is a string such as $y$ such that its composition  is  the same
as $s_i$ and $x0$ is a prefix of $y$ and $s_j$ is not a prefix of
$y$ for any $j<i$. If there exists such a string then the $l+1$th bit is
zero. Otherwise, the $l+1$th bit is one. After $|s_i|$ iterations we
find the desired $s'_i$. 

If there exists a code $S'$ which its composition set is the same as the
composition set of the code $S$ and $S'$ is prefix-free, iteratively as explained in the above, we can find it. Note that our algorithm has $n$ iterations, and in each of these iterations we are computing the sum of at most $n$ values of function $f$. All these operations can be done in time polynomial of $n$ and the sum of the lengths of the codewords. 
\end{proof}
\end{section}
\begin{section}{Fix-free codes}
\label{sec2}

In  Theorem \ref{thasli}, we introduce a sufficient condition under which for a class of codes that we call  \textit{distinct} codes, there exists a fix-free code with the same composition set as the composition set of a given code.


\begin{definition}
A code  $S=\{s_1, s_2, \dots, s_n\}$ is
\textit{distinct} if for any $1 \leq i, j \leq n$, $a_i$ and $a_j$,
satisfy one of the following properties ($a_k$ is the length of the codeword $s_k$ for any $k=1,2,\cdots,n$) :
\begin{itemize}
\item
$
a_i = a_j
$
\item
$
 2a_i \leq a_j
$
\item
$
 2a_j \leq a_i
$
\end{itemize}

It means that if any two codewords $s_i$ and $s_j$ do not have the same size, the size of one of them should be at least twice the size of the other one.
\end{definition}

In the following sequence of lemmas, we present some combinatorial facts that we refer to them along the proof of Theorem \ref{thasli}.

\begin{lemma}
\label{prefix} For a string $s$ with $c$ ones and $d$ zeros, the number of strings
which have $a$ ones and $b$ zeros, and $s$ is a prefix of them is
equal to $\binom{a+b-c-d}{a-c}$, i.e $f_{s,a,b} = \binom{a+b-c-d}{a-c}$.
\end{lemma}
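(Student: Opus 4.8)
The plan is to set up a direct bijection between the strings we want to count and a simpler combinatorial object. A string $s'$ of which $s$ is a prefix can be written uniquely as $s' = s t$, where $t$ is the "tail" string. Since $s$ has $c$ ones and $d$ zeros, and $s'$ is required to have $a$ ones and $b$ zeros, the tail $t$ must have exactly $a-c$ ones and $b-d$ zeros; in particular this forces $a \geq c$ and $b \geq d$, and the count is zero otherwise (consistent with the binomial coefficient, which we interpret as $0$ when the lower index is negative or exceeds the upper). Conversely, every binary string $t$ with $a-c$ ones and $b-d$ zeros yields, via $s' = st$, a valid string with $a$ ones, $b$ zeros, and prefix $s$, and distinct tails give distinct $s'$ because $s$ is a fixed prefix. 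Hence $f_{s,a,b}$ equals the number of binary strings with $a-c$ ones and $b-d$ zeros.

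The remaining step is the standard count: a binary string with $a-c$ ones and $b-d$ zeros has total length $(a-c)+(b-d) = (a+b)-(c+d)$, and it is determined by choosing which of these positions hold the ones, so there are $\binom{(a+b)-(c+d)}{a-c}$ of them. This is exactly $\binom{a+b-c-d}{a-c}$, which establishes the claim. I would also note in passing that the answer depends on $s$ only through its composition $(c,d)$, not through the particular arrangement of its bits — a fact that is used implicitly later.

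There is essentially no obstacle here: the only thing to be slightly careful about is the boundary convention for binomial coefficients when $a < c$ or $b < d$ (no valid tail exists), and the observation that the empty tail is allowed, so $f_{s,a,b} = 1$ when $(a,b) = (c,d)$, matching $\binom{0}{0} = 1$. Everything else is a one-line bijection plus the textbook formula for counting binary strings of a given composition.
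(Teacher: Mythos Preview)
Your proof is correct. The paper actually states this lemma without proof (treating it as an elementary combinatorial fact), so there is no paper proof to compare against directly; however, your bijection-to-the-tail argument is exactly the method the paper employs when it does prove the closely related Lemma~\ref{fix}, so your approach is fully in line with the paper's reasoning.
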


\begin{lemma}
\label{suffix} For a string $s$ with $c$ ones and $d$ zeros, the number of strings
which have $a$ ones and $b$ zeros, and $s$ is a suffix of them is
also equal to $\binom{a+b-c-d}{a-c}$.
\end{lemma}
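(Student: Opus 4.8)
The plan is to obtain Lemma~\ref{suffix} from Lemma~\ref{prefix} by a reversal bijection, so that essentially no new work is needed. For a binary string $w = w_1 w_2 \cdots w_m$ write $\overline{w} = w_m w_{m-1} \cdots w_1$ for its reversal. First I would record two elementary facts: reversal is an involution on the set of binary strings of any fixed length, and it preserves the composition (it does not change the number of zeros or the number of ones); and $s$ is a suffix of a string $y$ if and only if $\overline{s}$ is a prefix of $\overline{y}$.

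Next I would apply these facts. Since $\overline{s}$ has the same composition as $s$, namely $c$ ones and $d$ zeros, the map $y \mapsto \overline{y}$ is a bijection from the set of strings $y$ with $a$ ones and $b$ zeros having $s$ as a suffix onto the set of strings $x$ with $a$ ones and $b$ zeros having $\overline{s}$ as a prefix. By Lemma~\ref{prefix} applied to $\overline{s}$, the target set has cardinality $\binom{a+b-c-d}{a-c}$, and hence so does the source set, which is exactly the claim.

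Alternatively, and perhaps even more directly, I would note that any string $y$ with $a$ ones, $b$ zeros, and $s$ (having $c$ ones, $d$ zeros) as a suffix is uniquely of the form $y = w s$, where $w$ is the length-$(a+b-c-d)$ prefix of $y$; for $y$ to have the prescribed composition, $w$ must have exactly $a-c$ ones and $b-d$ zeros, and the number of such $w$ is $\binom{(a-c)+(b-d)}{a-c} = \binom{a+b-c-d}{a-c}$. In the write-up I would use whichever of these two arguments is shorter. There is no genuine obstacle here, since the statement is the mirror image of Lemma~\ref{prefix}; the only thing worth a sentence is the degenerate range, e.g. $c > a$, $d > b$, or $c+d$ exceeding the length of $s$, where both sides vanish under the convention $\binom{n}{k} = 0$ for $k < 0$ or $k > n$, so the identity still holds.
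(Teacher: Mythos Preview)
Your proposal is correct. The paper actually omits any proof of Lemma~\ref{suffix} (and of Lemma~\ref{prefix} as well), treating both as elementary combinatorial facts; only Lemma~\ref{fix} is given an argument, and that argument is precisely your second, direct counting approach applied to the two-sided case. Either of your two arguments---the reversal bijection reducing to Lemma~\ref{prefix}, or the direct decomposition $y = ws$---is a valid and standard justification, and your remark about the degenerate range is a small improvement in rigor over what the paper says.
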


\begin{lemma}
\label{fix} For any two strings $s_1$ with $c$ ones and $d$ zeros
and $s_2$ with $e$ ones and $f$ zeros, the number of strings which
have $a$ ones and $b$ zeros, and $s_1$ is a prefix of them, and also
$s_2$ is a suffix of them, is equal to $\binom{a+b-c-d-e-f}{a-c-e}$ if we
know that $a \geq c+e$ and $b \geq d+f$.
\end{lemma}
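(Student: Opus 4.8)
The plan is to prove the formula by exhibiting an explicit bijection between the strings being counted and binary strings of a single fixed composition, after which the binomial count is immediate. Write $\ell_1 = c+d$ for the length of $s_1$ and $\ell_2 = e+f$ for the length of $s_2$. First I would record the elementary but crucial consequence of the hypotheses: any string $w$ with $a$ ones and $b$ zeros has length $a+b \geq (c+e)+(d+f) = \ell_1 + \ell_2$, so a prefix occurrence of $s_1$ in $w$ and a suffix occurrence of $s_2$ in $w$ occupy disjoint blocks of positions and cannot overlap.

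Given this non-overlap, for every such $w$ the first $\ell_1$ symbols are exactly $s_1$, the last $\ell_2$ symbols are exactly $s_2$, and these two blocks are disjoint, so $w$ factors uniquely as $w = s_1\, m\, s_2$ where the middle block $m$ consists of symbols $\ell_1+1$ through $(a+b)-\ell_2$ of $w$ and has length $(a+b)-\ell_1-\ell_2$. Since $s_1$ supplies $c$ ones and $d$ zeros and $s_2$ supplies $e$ ones and $f$ zeros, $m$ must contain exactly $a-c-e$ ones and $b-d-f$ zeros, both nonnegative by the hypotheses $a \geq c+e$ and $b \geq d+f$. Conversely, for any binary string $m$ with $a-c-e$ ones and $b-d-f$ zeros, the concatenation $s_1 m s_2$ has $a$ ones and $b$ zeros, has $s_1$ as a prefix, and has $s_2$ as a suffix. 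Thus $m \mapsto s_1 m s_2$ is a bijection, so the desired count equals the number of binary strings with $a-c-e$ ones and total length $(a-c-e)+(b-d-f) = a+b-c-d-e-f$, which is $\binom{a+b-c-d-e-f}{a-c-e}$.

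The only step that genuinely needs attention — the ``obstacle,'' such as it is — is the non-overlap observation: without the hypotheses the prefix and suffix copies could straddle the same positions, the factorization $w = s_1 m s_2$ could fail or fail to be unique, and the clean bijection would collapse. As a consistency check, specializing to the empty suffix (i.e. $e = f = 0$) reduces the bijection to ``delete the prefix $s_1$'' and recovers the count $\binom{a+b-c-d}{a-c}$ of Lemma~\ref{prefix}.
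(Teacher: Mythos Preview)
Your proof is correct and follows essentially the same approach as the paper: both note that $a+b \geq (c+d)+(e+f)$ so the prefix and suffix blocks are disjoint, leaving a middle segment that must contain exactly $a-c-e$ ones and $b-d-f$ zeros, whence the binomial count. Your version is a bit more explicit about the bijection and the role of the non-overlap hypothesis, but the argument is the same.
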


\begin{proof}
Let $s'$ be one of these strings. We also know that $a+b \geq c+d+e+f$. The first $c+d$ letters of $s'$ are fixed because $s_1$ is a prefix of $s'$. The last $e+f$ letters of $s'$ are also fixed because $s_2$ is a suffix of $s'$. It remained to count the number of ways we can fix the rest of the letters of $s'$ such that $s'$ has $a$ ones and $b$ zeros. Note that $s'$ already has $c+e$ ones, and $d+f$ zeros. So we have to put $a-(c+e)$ ones, and $b-(d+f)$ zeros in the rest of the letters (the unfixed letters). This can be done in 
$\binom{a-(c+e)+b-(d+f)}{a-(c+e)} = \binom{a+b-c-d-e-f}{a-c-e}$ ways.
\end{proof}

\par
In [6] it is shown that for any distinct code $S=\{s_1,s_2, \cdots, s_n\}$ satisfying the inequality
$\sum_{i=1}^n 2^{-|s_i|} \leq 3/4$, there is a binary fix-free code with the same codeword lengths.
 In the following, we present a polynomial time algorithm which finds a fix-free code with the same set of composition codewords as the given code $S$, if there exists such a code.

\begin{theorem}
\label{thasli}
For any \textit{distinct} code $S$ with $n$ codewords $s_1, s_2, \dots, s_n$,
there is a polynomial time algorithm which finds a fix-free code
with the same set of composition codewords as the given code $S$, if there exists such a code.
\end{theorem}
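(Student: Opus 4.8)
The plan is to imitate the greedy scheme used above for prefix-free codes, but with a feasibility oracle that also accounts for suffixes; the \emph{distinct} hypothesis is precisely what keeps this oracle computable in closed form. First I would sort the codewords so that $|s_1|\le|s_2|\le\dots\le|s_n|$ and group them into \emph{levels} of equal length; by distinctness the distinct lengths $\ell_1<\ell_2<\dots<\ell_m$ satisfy $2\ell_t\le\ell_{t+1}$, so in particular $\ell_1+\dots+\ell_{t-1}<\ell_t$. I would then build $S'=\{s'_1,\dots,s'_n\}$ by processing the codewords in this order, and within a level in some fixed order, producing at step $i$ a string $s'_i$ with the same composition as $s_i$ such that no earlier $s'_j$ ($j<i$) is a prefix or a suffix of $s'_i$ and, for $j$ in the same level, $s'_j\ne s'_i$. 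Since every earlier codeword is no longer than $s_i$, these are the only fix-free constraints between $s'_i$ and the earlier codewords, so by induction $\{s'_1,\dots,s'_{i-1}\}$ is itself fix-free and, after the last step, $S'$ is a fix-free code with composition set $\Delta$. Each $s'_i$ is built bit by bit (preferring, say, $0$): having fixed a prefix $x$ of $s'_i$, append $0$ if $x$ still has a legal completion into such an $s'_i$ and $1$ otherwise; since ``legal completions of $x$'' is the disjoint union of ``legal completions of $x0$'' and ``legal completions of $x1$'', this reaches a legal $s'_i$ whenever one exists, and it signals infeasibility exactly when none does.

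The core of the proof is the feasibility oracle: given a target composition (say $a$ ones and $b$ zeros, $a+b=\ell_t$), a forced prefix $x$, and the fix-free set $F$ of already-placed codewords, count the strings $y$ of that composition having $x$ as a prefix and having no element of $F$ as a prefix or as a suffix. Writing $A_{s'}$ (resp. $B_{s'}$) for the set of such $y$ with $s'\in F$ as a prefix (resp. suffix), the prefix-freeness of $F$ makes the $A_{s'}$ pairwise disjoint and its suffix-freeness makes the $B_{s'}$ pairwise disjoint, so the count equals
\[
\binom{a+b-|x|}{\,a-(\text{ones of }x)\,}\;-\;\sum_{s'\in F}|A_{s'}|\;-\;\sum_{s'\in F}|B_{s'}|\;+\;\sum_{s',s''\in F}|A_{s'}\cap B_{s''}|,
\]
with out-of-range binomials read as $0$. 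Here $|A_{s'}|$ is $0$, the full prefix class, or $f_{s',a,b}$ (Lemma~\ref{prefix}) according to whether $s'$ and $x$ are prefix-incomparable, $s'\preceq x$, or $x\prec s'$; while $|B_{s'}|$ and $|A_{s'}\cap B_{s''}|$ are each evaluated by Lemma~\ref{fix} (of which Lemma~\ref{suffix} is the $x=\epsilon$ case), namely a product of binomials when the relevant prefix block and suffix block of $y$ are disjoint, and $0$ or $1$ when they overlap. This is exactly where \emph{distinctness} enters: when $s'$ and $s''$ come from levels below $t$ we have $|s'|+|s''|\le 2\ell_{t-1}\le\ell_t$, so the prefix and suffix blocks of $y$ are disjoint and Lemma~\ref{fix} applies verbatim, yielding a product of binomials depending only on the compositions and lengths of $s'$ and $s''$; the only members of $F$ of length exactly $\ell_t$ are those already placed in the current level, and each of them simply forbids one explicit string, handled directly. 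Hence every term, and thus the oracle, is computable in time polynomial in $n$ and $\sum_i|s_i|$.

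Finally I would observe that this oracle makes the greedy not merely sound but \emph{complete}. Because $f_{s',a,b}$, the analogous suffix count, and (via Lemma~\ref{fix}) the cross terms depend only on the compositions and lengths of the members of $F$ lying in levels below the current one, the number of legal length-$\ell_t$ strings of a given composition $\gamma$ is determined by the instance alone, independently of the particular choices made so far; the codewords placed earlier \emph{within the same level} each remove exactly one available string. Therefore the instance admits a fix-free code with composition set $\Delta$ if and only if, for every level $t$ and every composition $\gamma$ occurring at that level, this instance-determined number is at least the multiplicity of $\gamma$ among the level-$t$ codewords, and this is exactly what the greedy verifies. The step I expect to be the main obstacle is the exact evaluation of the displayed inclusion--exclusion: one has to be sure that the at-most-one-prefix / at-most-one-suffix reduction is legitimate (it relies on the running set being fix-free), that the degenerate cases $s'\preceq x$, $x\prec s'$, and $|s'|+|s''|>\ell_t$ are handled separately, and that the finitely many same-length codewords are removed exactly once.
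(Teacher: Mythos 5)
Your proposal is correct and follows essentially the same route as the paper's proof: a length-ordered greedy, bit-by-bit construction driven by a counting oracle based on the truncated inclusion--exclusion formula (prefix sets pairwise disjoint, suffix sets pairwise disjoint, no triple terms by pigeonhole), with distinctness invoked exactly where the paper invokes it---to guarantee $|s'_j|+|s'_k|\le a+b$ so that Lemma~\ref{fix} applies---and with completeness following from the observation that every term depends only on the compositions of the previously placed codewords. Your separate, explicit handling of the same-length codewords (each forbidding exactly one string) and of the forced-prefix degenerate cases is slightly more careful bookkeeping of details the paper treats briefly, but it is not a genuinely different argument.
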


\begin{proof}
Without loss of generality, suppose $a_1 \leq a_2 \leq \dots \leq
a_n$, where $a_i$ is the length of $s_i$, $i=1,2,\cdots,n$. Our algorithm has $n$ iterations. In the $i$th iteration, we
find a string $s'_i$ such that composition set of $s'_i$ is same to
composition set of $s_i$ and $s'_j$ is neither a prefix of $s'_i$ nor a
suffix of it for any $j<i$, as follows. After $n$th iteration we
reach the desired code $S'=\{s'_1, s'_2, \dots, s'_n\}$ such that its
composition set is as same as code $S$ and is fix-free. Let $a$ and $b$ be
the number of zeros and ones in $s_i$ respectively. Now, we want to
count the number of strings with $a$ ones and $b$ zeros which are
neither a prefix nor a suffix of any of the strings $s'_1, s'_2,
\cdots, s'_{i-1}$. Note that we can calculate this number only with
knowing the fact that the composition set of each $s'_j$ is exactly the
one of $s_j$, $j<i$. This means that this number  depends only on the
number of  ones and zeros of the previous strings. Now we derive
the number as follows. The number of strings with $a$ ones and $b$ zeros is equal to
$\binom{a+b}{a}$. We decrease the number of strings which have $a$ ones
and $b$ zeros, and $s'_j$ is a prefix of them. We do this decreasing
process for any $j<i$. We also decrease the number of strings which
have $a$ ones and $b$ zeros, and $s'_j$ is a suffix of them. Again
we do this decreasing process  for any $j<i$. According to the fact
that we know the numbers of ones and zeros of $s'_j$ and using
Lemmas \ref{prefix} and \ref{suffix}, we can calculate these
numbers. Now, note that some strings might be decreased twice. For example
for a string $s$ we might have that $s'_j$ is its prefix and also
$s'_k$ is its suffix for some $j,k<i$. But there is no string such as
$s$ that two strings such as  $s'_j$ and $s'_k$ are its prefix at the
same time, because it means that one of these two strings is a prefix
of another which contradicts the fact that none of the strings $s'_1,
s'_2, \cdots, s'_{i-1}$ is a prefix or suffix of another. We can
also conclude that there is no string such as $s$ that two strings such as
$s'_j$ and $s'_k$ are its suffix at the same time. Therefore we just
need to add the number of strings with $a$ ones and $b$ zeros that
$s'_j$ is its prefix, and $s'_k$ is also its suffix for any pair of $j,k$ where $1 \leq j,k < i$. Now for calculating the number of strings which have $a$ ones and
$b$ zeros, and $s'_j$ is their prefix, and $s'_k$ is their suffix,
we have two cases. At first, we suppose that one of these two strings, $s'_j$ and
$s'_k$, has the same length of $s_i$. Without loss of generality
suppose $a_j=a_i$. Now we assert that there is no string such as $s$
that $s'_j$ is its prefix and $s'_k$ is its suffix. Otherwise,
according to the fact that the length of $s'_j$ is equal to $a+b$
which is the length of $s_i$ and $s$, we conclude that $s$ is equal
to $s'_j$. We also know that $s'_k$ is a suffix of $s$ and also is a
suffix of $s'_j$ which contradicts the fact that none of the strings
$s'_1, s'_2, \cdots, s'_{i-1}$ is a prefix or suffix of another.
Therefore there is no such string and our desired number is zero. The other case occurs when the length of both $s'_j$ and $s'_k$ are
strictly less than the length of $s_i$. Using the fact that our code
is \textit{distinct} we conclude that $2|s'_j| \leq a+b$ and $2|s'_k| \leq
a+b$, so we have $|s'_j|+|s'_k| \leq a+b$. Now we can apply Lemma
\ref{fix}, and calculate our desired number. According to the Inclusion and Exclusion principle we should
continue this process of decreasing and increasing iteratively, but
actually we do not need to do it anymore, because there is not any
string such as $s$ such that three strings like $s'_j$, $s'_k$ and
$s'_l$ are either its prefix or its suffix. The reason is somehow
clear, because if there were the three strings $s'_j$, $s'_k$ and $s'_l$
which are either a prefix or a suffix of $s$, then according to the
pigeonhole principle two of them should be a prefix of $s$, or two
of them should be a suffix of $s$. In the former case, we see that
one of the strings $s'_j$, $s'_k$ and $s'_l$ is a prefix of another,
and in the latter case, we see that one of the strings $s'_j$,
$s'_k$ and $s'_l$ is a suffix of another. But this again contradicts
the fact that none of the strings $s'_1, s'_2, \cdots, s'_{i-1}$ is a
prefix or suffix of another. So, using this algorithm, we can iteratively count the number of
choices we have to replace with $s_i$. If this number is zero in one
step, this means that there does not exist such a fix-free code. But,
if this number is greater than zero in each iteration, we have some
choices in each iteration and, finally we reach a fix-free code.

So, for string  $s_i$ we count the number of strings like $s'_i$ with the same composition set of $s_i$ such that no $s'_j$ (for $j<i$) is neither a prefix of $s'_i$ nor a suffix of $s'_i$. We can compute this number as follows: 

\begin{eqnarray}
\binom{a+b}{a} &-& \sum_{1 \leq j < i} \mbox{PrefixNum}(s_i,s'_j) -\sum_{1 \leq j < i} \mbox{SuffixNum}(s_i,s'_j) \nonumber \\
               &+& \sum_{1 \leq j,k < i} \mbox{PrefixSuffixNum}(s_i,s'_j,s'_k)  
\end{eqnarray}
In  above formula, $\mbox{PrefixNum}(s_i,s'_j)$ is the number of strings like $s'_i$ with the same composition set of $s_i$ such that $s'_j$ is its prefix. Similarly, $\mbox{SuffixNum}$ is defined. We also define $\mbox{PrefixSuffixNum}(s_i,s'_j,s'_k)$ to be the number of strings like $s'_i$ with the same composition set of $s_i$ such that $s'_j$ is its prefix, and $s'_k$ is its suffix. Note that the above formula is basically the simplified version of  Inclusion  Exclusion Principle knowing the fact that there can not be three strings among $s'_1, s'_2, \cdots, s'_{i-1}$ such that each of them is either a prefix or a suffix of the same string. 

If this number is positive we know that there exists a string $s'_i$ with the desired properties. But we have to find this string as well. This is done by searching in the binary search in the tree of all strings. 
Here we show that we can find the lowest string (alphabetically) $s'_i$ with these properties. 
At first we try to find a string $s'_i$ that starts with zero.  We count all strings $s'_i$ with the desired properties that also start with zero. This can be done by changing each term in the above formula by assuming that $s'_i$ starts with zero. For example, instead of $\binom{a+b}{a}$ we should write $\binom{a+b-1}{a}$. If $s'_j$ starts with one, the number $\mbox{PrefixNum}(s_i,s'_j)$ should be replaced with zero because we know that $s'_i$ is supposed to start with zero, and therefore $s'_j$ can not be its prefix. So, we change the above formula, accordingly. If the number of these strings is positive, we know that there exists an string $s'_i$ with the desired properties that also starts with zero. So, we fix the first digit to be zero, and go on to the next digit. We can iteratively continue this process till there are $a$ ones and $b$ zeros in our string. This can be done by computing the above formula $a+b$ times (in each iteration we fix a digit). 

Our algorithm runs in polynomial time in terms of $n$ and the total number of ones and zeros in all $n$ input strings.   
\end{proof}

In  Lemma \ref{4app}, a polynomial time algorithm is provided to find a near optimal fix-free code when its maximum cost and the number of codewords are given. To the best of our knowledge, it is the first approximation algorithm for this problem. We assumed (without loss of generality) that the cost of a zero is $1$ and the cost of a one is $m \geq 1$. \\

Notice that in the case when the letter costs are equal, i.e. $m=1,$ it is known that (\cite{ahl}) for each probability distribution $P=(p_1,p_2,\cdots,p_n)$ there exists a fix-free code where the average cost of the codewords is bounded above by $H(P)+2$, where $H(P)=-\sum_{i=1}^np_i \log p_i$ is the entropy of the source.
In the following  lemma the objective is to minimize the average codeword cost (defined in \eqref{eqn1}) for \textit{general letter costs} with \textit{uniform distribution} of the codewords.

\begin{lemma}
\label{4app} 
For any given number $x$, if there exists a fix-free code
such as $S$ with $n$ codewords and cost at most $x$, we can find a fix-free
code in polynomial time with cost at most $(5+{1 \over n-1})x$. 
\end{lemma}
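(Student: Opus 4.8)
The plan is to reduce the approximate-optimization problem to the exact feasibility problem solved by Theorem~\ref{thasli}. The key observation is that, under uniform distribution, the average codeword cost of $S=\{s_1,\dots,s_n\}$ is $\frac{1}{n}\sum_{k=1}^n (b_k + m a_k)$, where $(a_k,b_k)$ counts the ones and zeros of $s_k$; so minimizing average cost is the same as minimizing the total cost $\sum_k(b_k+ma_k)$. The difficulty is that Theorem~\ref{thasli} only certifies existence of a fix-free code with a \emph{prescribed} composition multiset, and only when the underlying length pattern is \emph{distinct}. We therefore cannot search over all composition multisets; instead I would guess a simple, highly structured family of candidate codes, show one member of that family has cost within the claimed factor of the optimum, and verify that member is distinct so that Theorem~\ref{thasli} applies to realize it as an actual fix-free code.

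First I would take the optimal fix-free code $S$ with $n$ codewords and cost at most $x$, and record only its length multiset $\ell_1\le\cdots\le\ell_n$; since every letter costs at least $1$, we have $\sum_k \ell_k \le x$. Next I would build a candidate code $S'$ whose codewords all use the cheap letter (zeros) as much as possible: e.g. pick codeword lengths of the form $L_k$ that are a ``rounding'' of the $\ell_k$ up to values making the length pattern distinct (doubling gaps where needed), assign each codeword the composition with as few ones as possible subject to a fix-free code with those lengths being realizable — in the equal-cost regime the Yekhanin-type bound $\sum 2^{-|s_i|}\le 3/4$ cited just before Theorem~\ref{thasli} guarantees that distinct length patterns with small Kraft sum are realizable, and one can afford the length blow-up because it only costs a constant factor. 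The cost of $S'$ is then $\sum_k(b'_k + m a'_k)$; the number of ones $a'_k$ needed per codeword is $O(1)$ relative to its length (a logarithmic number suffice to index the $\le 2^{|s|}$ strings of a given length, but the distinctness/Kraft slack lets us keep it proportional), and the zeros contribute at most the (constant-factor-inflated) total length $\sum_k L_k = O(\sum_k \ell_k)\le O(x)$. Bookkeeping the constants is where the precise ``$5+\frac{1}{n-1}$'' comes from: the $\frac{1}{n-1}$ correction is the tell-tale sign that one codeword (the length-$0$ or shortest one) is handled separately and the remaining $n-1$ absorb an averaging loss.

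Then I would invoke Theorem~\ref{thasli}: the candidate length pattern is distinct by construction, so the polynomial-time algorithm of that theorem either returns a fix-free code with exactly the composition multiset of $S'$ — which then has the same total cost as $S'$ and hence cost at most $(5+\frac{1}{n-1})x$ — or reports infeasibility, in which case I would argue the particular conservative choice of compositions (few ones, generous length slack) is always feasible, so the infeasible branch never occurs. Running this for the (polynomially many) relevant guesses of how the rounding is done, and keeping the cheapest output, gives the claimed polynomial-time algorithm. The main obstacle I expect is the second step: choosing the rounded length pattern and per-codeword number of ones so that (i) the pattern is provably distinct, (ii) a fix-free realization provably exists (so Theorem~\ref{thasli} succeeds), and (iii) the total cost is provably at most $5x$ up to the $\frac{1}{n-1}$ slack — all three simultaneously. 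Balancing these three constraints, rather than any single estimate, is the crux; the rest is the reduction to Theorem~\ref{thasli} and routine constant-chasing.
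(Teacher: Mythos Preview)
Your plan misidentifies both the mechanism and the role of Theorem~\ref{thasli}. The paper's argument does \emph{not} invoke Theorem~\ref{thasli} at all: it simply outputs a code whose codewords all have the \emph{same} length $l+1$, which is trivially fix-free. Concretely, with $y=x/n$, a Markov argument shows at least $A\ge n/2$ of the optimal codewords have cost $\le 2y$, hence length $\le l:=\lfloor 2y\rfloor$ and at most $k:=\lfloor 2y/m\rfloor$ ones. Padding those $A$ codewords with zeros to length $l$ keeps them pairwise distinct (prefix-freeness of the optimum), and appending a $0$/$1$ bit doubles them to $2A\ge n$ distinct length-$(l+1)$ strings with at most $k+1$ ones. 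So the set $T$ of length-$(l+1)$ strings with $\le k+1$ ones has $|T|\ge n$, and any $n$ of them form the desired code. The cost bound $[(k+1)m+(l-k)]n\le (5+\tfrac{1}{n-1})x$ then follows from $kmn\le 2x$, $ln\le 2x$, and $x\ge (n-1)m$; this last inequality---the true source of the $\tfrac{1}{n-1}$---holds because at most one codeword of any code can be all zeros.

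Your proposed route has a genuine gap at the ``rounding'' step. The \textit{distinct} condition forces unequal lengths to differ by a factor of two, so if the optimal code has many distinct but nearby lengths (say $\ell_k=k$), any rounding that separates them into different length classes inflates $\sum_k L_k$ superpolynomially, not by a constant. Collapsing instead to a single common length $L=\max_k\ell_k$ avoids this, but then $\sum_k L_k=nL$ can be $\Theta(n)\cdot\sum_k\ell_k$, again not a constant factor; you are missing exactly the Markov/median trick above that isolates the cheap half before equalizing lengths. Your handling of the ones is also not tied to the letter cost $m$: saying ``a logarithmic number suffice'' ignores that each $1$ costs $m$, and the whole point of the $k=\lfloor 2y/m\rfloor$ bookkeeping is to bound $m$ times the number of ones by $O(x)$. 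Finally, once one sees that equal-length codes are automatically fix-free, the detour through Theorem~\ref{thasli} (and the accompanying feasibility worry you flag as ``the main obstacle'') disappears.
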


\begin{proof}
Let $y$ be $x/n$. Note that $y$ is the mean cost of the $n$ codewords
in $S$. So the number of codewords with cost more than $2y$ is less than
$n/2$ and the number of codewords with cost at most $2y$ is at least
$n/2$. Because if there are more than $n/2$ codewords in $S$ with cost at least $2y$, the total cost of $S$ would be more than $n/2 \times 2y = n \times y = x$ which is a contradiction. Let $A$ be the number 
of codewords in $S$ with cost at most $2y$. We conclude that 
$A$ is at least $n/2$. Name these $A$ codewords
$s_1, s_2, \cdots, s_A$.   

 These codewords have at most $l=\lfloor 2y\rfloor$
letters(including zeros and ones) and at most $k=\lfloor 2y/m
\rfloor$ ones (because zero has cost $1$, and one has cost $m$). 
Let $A$ be the number of codewords with at most $l$
letters and $k$ ones). 

Now we change these $A$ codewords in the following way to get $A$ new codewords that have the same size, and are also fix-free. 

If some of these codewords have less than $l$
letters, we add some zeros to their ends in order to make all of
them have the same length, $l$. 
So we add $l-|s_i|$ zeros at the end of $s_i$ where $|s_i|$ is the length of $s_i$. Let $s'_i$ be the new codeword. Clearly we get $A$ codewords $s'_1, s'_2, \cdots, s'_A$ with the same size, $l$. We now prove that these $A$ new codewords are different by contradiction. 

Assume two codewords $s'_i$ and $s'_j$ are the same. Without loss of generality, assume that $|s_i| \geq s_j$. Since $s'_i$ is the same as $s'_j$, the codeword $s_j$ is a prefix of $s_i$ which is a contradiction. Because codewords $s_1, s_2, \cdots, s_n$ come from a fix-free code, so none of them can be a prefix of another. 
So the codewords $s'_1, s'_2, \cdots, s'_A$ are not equal to each other at all.

Now we can get $2A$ codewords which form a fix-free code with some modifications as follows. 
For each codeword $s'_i$, add a zero at the end of $s'_i$, and get the new codeword $s'_{0,i}$. In the same way add a one at the end of $s'_i$, and get the new codeword $s'_{i,1}$. Now we have $2A$ codewords $s'_{1,0}, s'_{2,0}, \cdots, s'_{A,0}, s'_{1,1}, s'_{2,1}, \cdots, s'_{A,1}$ each of which has size $l+1$. Since the $A$ codewords $s'_1, s'_2, \cdots, s'_A$ are $A$ different codewords, these $2A$ codewords are also different, and have the same size, so none of them is a prefix or suffix of another one. 

Since $2A$ is at least $n$, 
we conclude that there exists $n$ codewords with length $l+1$ and at most $k+1$ ones in each of the codewords. 

Let $T$ be the set of all codewords with length $l+1$ and at most $k+1$ ones. We proved that there are at least $n$ codewords in $T$.
We just need to pick $n$ arbitrary codewords from $T$ (one can start from the codewords with one $1$, and then two $1$s, and so on, and pick $n$ codewords this way). Since all members of $T$ have the same size and two different codewords with the same size can not be prefix or suffix of each other,  the result of our algorithm would be fix free. 

Now we analyze the cost of the code we obtained. 
The cost of these $n$ arbitrary codewords is at most 
$[(k+1)m+(l-k)]n$. The ratio of this
cost to the optimal cost $x$ is ${[(k+1)m+(l-k)]n \over x}={kmn
\over x } + {ln \over x} + {(m-k)n \over x} \leq 2 + 2 + {mn \over x}
\leq 4 + 1+ {1 \over (n-1)} = 5+\frac{1}{n-1}$. 
Note that we defined $l$ and $k$ such that $ln \leq 2x$, and $kmn \leq 2x$. 
We also know that there are at most one word in the
optimal fix-free code that does not have any one. 
So there are $n-1$ codewords in optimal code that each of them has at least one $1$. So the cost of optimum, (which is at most $x$), is  at least $(n-1)m$ and therefore ${mn \over x}  \leq 1 + {1 \over n-1}$. So we proved that the cost of our code is  at
most $[5 + 1/(n-1)]x$.
\end{proof}

Note that when there does not exist a fix-free code with cost at most $x$, the algorithm in
 Lemma \ref{4app} may return a code with cost at most $(5+{1 \over n-1})x$ or fail.
 
 Furthermore, it is useful to add that Lemma \ref{4app} fails if and only if the set $T$ contains less than $n$ codewords, and that, as $x$ increases, the size of $T$  does not decrease, therefore, if the algorithm is successful for some $x$, then it will be successful for all values larger than $x$.\\
 
 In the following theorem, we present  an approximation algorithm that always finds a
fix-free code such that its cost is at most $5+{1 \over n-1}+\epsilon$ times the cost of the optimal code.

\begin{theorem}
For any $n$ and $\epsilon>0$, there is a $5 + {1 \over
n-1}+\epsilon$-approximation algorithm for the problem of finding
the optimal fix-free code with $n$ codewords such that its time
complexity is a polynomial of the $n$ and $1 \over \epsilon$.
\end{theorem}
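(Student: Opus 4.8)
The plan is to turn Lemma~\ref{4app} into an approximation algorithm by searching over the target value $x$ that is fed to it, exploiting the monotonicity observed just after that lemma: if its algorithm succeeds for some $x$, it succeeds for every larger $x$. Write $\mathrm{OPT}$ for the cost of an optimal fix-free code with $n$ codewords (so $n\ge 2$, since $\tfrac1{n-1}$ appears). The first step is to fix crude a~priori bounds $x_{\mathrm{lo}}\le \mathrm{OPT}\le x_{\mathrm{hi}}$ whose ratio has logarithm polynomial in the input size. For the lower bound, as already noted in the proof of Lemma~\ref{4app}, at most one codeword of a fix-free code can consist only of zeros, so at least $n-1$ of the codewords contain a one; since a one costs $m$, we get $\mathrm{OPT}\ge (n-1)m =: x_{\mathrm{lo}}$. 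For the upper bound, put $L:=\lceil\log_2 n\rceil$: the $2^L\ge n$ binary strings of length $L$ are pairwise neither prefixes nor suffixes of one another, and each has cost at most $L\,m+L=L(m+1)$; taking any $n$ of them yields a fix-free code, so $\mathrm{OPT}\le n\,L(m+1)=:x_{\mathrm{hi}}$. Since $m\ge 1$, this gives $x_{\mathrm{hi}}/x_{\mathrm{lo}}=\tfrac{nL(m+1)}{(n-1)m}\le 4(\log_2 n+1)$, hence $\log(x_{\mathrm{hi}}/x_{\mathrm{lo}})=O(\log\log n)$.

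Next, set $\epsilon':=\epsilon/\bigl(5+\tfrac1{n-1}\bigr)$ and consider the geometric grid $x_j:=x_{\mathrm{lo}}(1+\epsilon')^j$ for $j=0,1,\dots,J$, where $J$ is the least index with $x_J\ge x_{\mathrm{hi}}$; then $J=O\!\bigl(\tfrac{\log(x_{\mathrm{hi}}/x_{\mathrm{lo}})}{\epsilon'}\bigr)=O\!\bigl(\tfrac{\log\log n}{\epsilon}\bigr)$, which is polynomial in $n$ and $1/\epsilon$. I would run the algorithm of Lemma~\ref{4app} on $x_0,x_1,\dots$ in turn and let $x_{j^*}$ be the first grid value on which it succeeds; such a $j^*$ exists because a fix-free code of cost at most $x_{\mathrm{hi}}\le x_J$ exists, so the algorithm succeeds at $x_J$. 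Output the fix-free code the algorithm returns on input $x_{j^*}$; by Lemma~\ref{4app} its cost is at most $\bigl(5+\tfrac1{n-1}\bigr)x_{j^*}$.

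It remains to bound $x_{j^*}$ against $\mathrm{OPT}$. If $j^*=0$ then $x_{j^*}=x_{\mathrm{lo}}\le \mathrm{OPT}$, so the output cost is at most $\bigl(5+\tfrac1{n-1}\bigr)\mathrm{OPT}$. If $j^*\ge 1$, the algorithm fails on $x_{j^*-1}$, hence, by monotonicity, on every value $\le x_{j^*-1}$; since a fix-free code of cost $\mathrm{OPT}$ exists, the algorithm succeeds at $\mathrm{OPT}$, which forces $\mathrm{OPT}>x_{j^*-1}=x_{j^*}/(1+\epsilon')$, and therefore the output cost is at most $\bigl(5+\tfrac1{n-1}\bigr)x_{j^*}<\bigl(5+\tfrac1{n-1}\bigr)(1+\epsilon')\,\mathrm{OPT}=\bigl(5+\tfrac1{n-1}+\epsilon\bigr)\mathrm{OPT}$, by the choice of $\epsilon'$. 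Each of the $J+1$ calls runs in time polynomial in $n$ and the input size by Lemma~\ref{4app}, and $J+1$ is polynomial in $n$ and $1/\epsilon$, so the whole procedure is polynomial in $n$ and $1/\epsilon$. The step needing the most care is the first one — exhibiting explicit bounds $x_{\mathrm{lo}},x_{\mathrm{hi}}$ on $\mathrm{OPT}$ whose ratio is small enough (polynomial bit-length, in fact $O(\log\log n)$) that the geometric search has only polynomially many steps; the monotonicity of Lemma~\ref{4app} then supplies everything else.
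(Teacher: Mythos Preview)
Your proof is correct and follows the same high-level strategy as the paper: feed candidate values of $x$ into the routine of Lemma~\ref{4app}, exploit its monotonicity in $x$, and locate a value close to $\mathrm{OPT}$. The details differ in two places. First, the paper brackets $\mathrm{OPT}$ by $n\le \mathrm{OPT}\le n(n-1+m)$ (using the $n$ length-$n$ words with a single one) and then runs an \emph{additive} bisection down to width~$\epsilon$, costing $O\!\bigl(\log(n(n+m)/\epsilon)\bigr)$ calls; you instead use the tighter pair $(n-1)m\le \mathrm{OPT}\le n\lceil\log_2 n\rceil(m+1)$, whose ratio is $O(\log n)$ and independent of $m$, and then scan a \emph{multiplicative} grid with step $1+\epsilon'$, costing $O(\epsilon^{-1}\log\log n)$ calls. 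Second, the paper's last inequality $(5+\tfrac{1}{n-1})(y+\epsilon)\le(5+\tfrac{1}{n-1}+\epsilon)y$ silently uses $y\ge 5+\tfrac{1}{n-1}$; your choice $\epsilon'=\epsilon/(5+\tfrac{1}{n-1})$ makes the corresponding step an identity, which is cleaner. The trade-off is that the paper's bisection gives logarithmic dependence on $1/\epsilon$ while yours is linear in $1/\epsilon$, but your iteration count is free of $m$; either way the total work is polynomial in $n$ and $1/\epsilon$, so both arguments prove the theorem.
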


\begin{proof}
Let $y$ be the cost of the optimal fix-free code. If we know the value of $y$, we can find a fix-free code with cost
at most $(5+{1 \over n-1})y$ using Lemma \ref{4app}, and the claim is true.
Although $y$ is not given as an input,
we can guess the $y$ by a typical binary
search and with error $\epsilon$ by guessing $O(\log(n(n+m)/\epsilon))$ times.
Actually we know that $y$ is at least $n$.
We also know that $y$ is at most $n(n-1+m)$ because
there are exactly $n$ codewords which have only one $1$ and $n-1$ zeros.
These codewords form a fix-free code and the cost of this code is
$n(n-1+m)$. So we have $n \leq y \leq n(n-1+m)$.
Let $x$ be the minimum number for which the algorithm in
Lemma \ref{4app} returns a code with cost at most $(5+{1 \over n-1})x$.
We are going to find $x$ with error $\epsilon$.
We know that $x \leq y$ and $0 \leq x \leq n(n-1+m)$.
we are going to run a binary search in the interval
$[0,n(n+m-1)]$. In each step, we can decrease the length of our interval to half of its previous length.
For example, if we know that $x$ is in $[\alpha, \beta]$, we define $z$ to be ${\alpha+\beta \over 2}$. Next using
Lemma \ref{4app}, we can know that whether $x \leq z$ or not, because if the algorithm in Lemma \ref{4app} fails, $x$ is
greater than $z$. Otherwise, $x$ is at most $z$. So after each step we know that $x$ is in
$[\alpha , {\alpha+\beta \over 2}]$ or $[{\alpha+\beta \over 2} , \beta]$. Therefore the length of our searching interval
is multiplied by $1 \over 2$ in each step, and after $\log(n(n+m)/\epsilon)$ steps the length of our interval is
at most $\epsilon$.
Because at first the length is less than $n(n+m)$. Finally we know that $x$ is in $[t,t+\epsilon]$ where the algorithm
in Lemma \ref{4app} does not fail for $t+\epsilon$. In the other words we can find a fix-free code with cost at
most $(5+{1 \over n-1})[t+\epsilon]$. As we know $t+\epsilon \leq x+\epsilon \leq y+\epsilon$. We conclude that the
fix-free code that we just found has a cost of at most $(5+{1 \over n-1})[t+\epsilon] \leq (5+{1 \over n-1})[y+\epsilon]
(5+{1 \over n-1} + \epsilon)y$ because $y$ is at least $n$. Therefore we found a fix-free code with cost at most $(5+{1 \over n-1}+\epsilon)$
times the cost of the optimal code.

\end{proof}

\end{section}



\subsection*{Acknowledgments}
The authors are indebted to  Serap Savari for stimulating and productive discussions
for this work.


\end{document}


\section{A theorem}

\stmt{thrm}{sample}{Socrates is mortal.}

\refstmt{sample} can either be proven using data (the fact that
Socrates is dead), or by the proof which is provided in the appendix.

\section{Appendix}
In this appendix, the reader will find proofs of theorems not given in the text.

\stmtproof{sample}{Socrates is a man. All men are mortal.}

\rptstmtwithproof{sample}

\comment{Notice that you can put the \stmtproof{sample}{...} anywhere
you want, including right after your statement, just before using
\rptstmtwithproof, or a separate file (then use \input{proofs.tex}). That
way, neither your finished paper nor your source will be cluttered with
proofs.}